\newtheorem{obs}{Observation}
\date{}
\title{On the Complexity of Connected $(s,t)$-Vertex Separator}
\author{N.S.Narayanaswamy ${^1}$  \and N.Sadagopan ${^2}$} 
\institute{${^1}$ Department of Computer Science and Engineering, Indian Institute of Technology Madras, India \\ ${^2}$ Indian Institute of Information Technology, Design and Manufacturing, Kancheepuram, India. \\
\email{\{swamy@cse.iitm.ac.in, sadagopan@iiitdm.ac.in\}}}
\begin{document}
\maketitle
\begin{abstract}
We investigate the complexity of minimum connected $(s,t)$-vertex separator ($(s,t)$-CVS) and show that $(s,t)$-CVS is $\Omega(log^{2-\epsilon}n)$-hard for any $\epsilon >0$ unless NP has quasi-polynomial Las-Vegas algorithms.  i.e., for any $\epsilon >0$ and for some $\delta >0$, $(s,t)$-CVS is unlikely to have $\delta.log^{2-\epsilon}n$-approximation algorithm.  We then present an interesting chordality dichotomy: we show that $(s,t)$-CVS is NP-complete on graphs of chordality at least 5 and present a polynomial-time algorithm for $(s,t)$-CVS on chordality 4 graphs.  We also present a $\lceil\frac{c}{2}\rceil$-approximation algorithm for $(s,t)$-CVS on graphs with chordality $c$.  Finally, from the parameterized setting, we show that $(s,t)$-CVS parameterized above the $(s,t)$-vertex connectivity is $W[2]$-hard.
\end{abstract}
\section{Introduction}
The vertex or edge connectivity of a graph and the corresponding separators are of fundamental interest in Computer Science and Graph Theory.  Many kinds of vertex separators,  stable vertex separators \cite{stable}, clique vertex separators \cite{clique}, constrained vertex separators \cite{marx}, and $\alpha$-balanced separators \cite{marx} are of interest to the research community. 
As far as complexity results are concerned, finding a minimum vertex separator and a clique vertex separator are polynomial-time solvable, whereas, stable vertex separator and other constrained separators reported in \cite{marx} are NP-hard.  This shows that imposing an appropriate constraint on the well-studied vertex separator problem makes the problem NP-hard.  Interestingly, constrained vertex separators have received much attention in parameterized complexity as well \cite{marx,balanced}.   In particular, Marx in \cite{marx} considered the parameterized complexity of constrained separators satisfying some hereditary properties.   For example, clique separators and stable separators.  It is shown in \cite{marx} that the above problems have an algorithm whose running time is $f(k).n^{O(1)}$, where $k$ is the size of a constrained separator.  Algorithms of this nature are popularly known as fixed-parameter tractable algorithms with parameter as the solution size \cite{rolf}.   While many constrained vertex separators have attracted researchers from both classical and parameterized complexity, the related problem of finding a minimum connected $(s,t)$-vertex separator is open.  In light of \cite{marx}, this question can also be looked at as finding a $(s,t)$-vertex separator satisfying some non-hereditary property, like connectedness.  Moreover, the results in \cite{marx} do not carry over to connected $(s,t)$-vertex separator and the complexity of it remains open.   With these motivations, in this paper, we focus our attention on the computational complexity of minimum connected $(s,t)$-vertex separator ($(s,t)$-CVS).    \\
{\bf Remark:} The $(s,t)$-CVS can also be motivated from the theory of graph minors. We observe that there is an equivalence between the computational problems of finding a minimum connected $(s,t)$-vertex separator and a minimum set of edges whose contraction reduces the $(s,t)$-vertex connectivity to one.  It is important to note that the analogous computational problem of reducing the $(s,t)$-edge connectivity to zero by a minimum number of edge deletions is polynomial-time solvable, because this is computationally equivalent to finding a minimum $(s,t)$-cut and deleting all edges in it. \\
{\bf Our Results:} In this paper, we consider connected undirected unweighted simple graphs.   For a graph $G$, let $(s,t)$ denote a fixed non-adjacent pair of vertices in $G$.  Throughout this paper, when we refer to edge contraction, we do not contract edges incident on $s$ and edges incident on $t$.  \\
1. We establish a polynomial-time reduction from the Group Steiner Tree [ND12, \cite{garey}] to $(s,t)$-CVS.  Consequently, it follows that there is no polynomial-time approximation algorithm with approximation factor $\delta.log^{2-\epsilon}n$ for some $\delta >0$ and for any $\epsilon >0$, unless NP has quasi-polynomial Las-Vegas algorithms.\\
2. We then observe that on chordal graphs finding a minimum $(s,t)$-CVS is polynomial-time solvable as every minimal vertex separator is a clique.  We show that deciding $(s,t)$-CVS is NP-complete on chordality 5 graphs and on chordality 4 graphs $(s,t)$-CVS is polynomial-time solvable.  We also present a $\lceil\frac{c}{2}\rceil$-approximation algorithm for $(s,t)$-CVS on graphs with chordality $c$.  \\
3. We then consider designing algorithms for $(s,t)$-CVS whose running time is $f(k).n^{O(1)}$ where $k$ is the parameter of interest and $f(k)$ is a function independent of $n$.  If the parameter of interest is the chordality $c$ of the graph, then it follows from the above result that $(s,t)$-CVS is unlikely to have an algorithm whose running time is $f(c).n^{O(1)}, c \geq 5$, unless P=NP.   Whereas, on graphs of treewidth $\omega$, we show the existence of an algorithm for $(s,t)$-CVS with run time $f(\omega).n^{O(1)}$, here treewidth is the parameter of interest.  Algorithms with running time of this nature are well-studied in the literature and they are called fixed-parameter tractable algorithms in the theory of parameterized complexity \cite{rolf}.  Further, an important lower bound for $(s,t)$-CVS is the $(s,t)$-vertex connectivity itself.  It is now natural to consider the following parameterization: the size of a $(s,t)$-CVS minus the $(s,t)$-vertex connectivity.  This type of parameterization is known as above guarantee parameters \cite{aboveguarantee}.  We show that $(s,t)$-CVS parameterized above the $(s,t)$-vertex connectivity is unlikely to be fixed-parameter tractable under standard parameterized complexity assumption, and in the terminology of parameterized hardness theory, it is hard for the complexity class $W[2]$ in the $W$-hierarchy. \\
{\bf Graph Preliminaries:}
Notation and definitions are as per \cite{golu,west}.  Let $G =(V,E)$ be a connected undirected unweighted simple graph where $V(G)$ is the set of vertices and $E(G)$ is the set of edges.  For $S \subset V(G)$, $G[S]$ denote the graph induced on the set $S$ and $G \setminus S$ is the induced graph on the vertex set $V(G) \setminus S$. A vertex separator $S \subset V(G)$ is called a $(s,t)$-vertex separator if in $G \setminus S$, $s$ and $t$ are in two different connected components and $S$ is minimal if no proper subset of it is a $(s,t)$-vertex separator.  A minimum $(s,t)$-vertex separator is a minimal $(s,t)$-vertex separator of least size.  The $(s,t)$-vertex connectivity denote the size of a minimum $(s,t)$-vertex separator.  A connected $(s,t)$-vertex separator $S$ is a $(s,t)$-vertex separator such that $G[S]$ is connected and such a set $S$ of least size is a minimum connected $(s,t)$-vertex separator.   For a minimal $(s,t)$-vertex separator $S$, let $C_s$ and $C_t$ denote the connected components of $G \setminus S$ such that $s$ is in $C_s$ and $t$ is in $C_t$.  We let $G \cdot e$ denote the graph obtained by contracting the edge $e=\{u,v\}$ in $G$ such that $V(G \cdot e)=V(G)\setminus\{u,v\}\cup\{z_{uv}\}$ and $E(G \cdot e)=\{ \{z_{uv},x\} ~|~ \{u,x\}$ or $\{v,x\} \in E(G)\} \cup \{\{x,y\} ~|~ \{x,y\} \in E(G) \mbox{ and }x \not= u, y \not= v\}$.  A graph is said to have chordality $c$, if it contains no induced cycle of length at least $c+1$.  i.e. every cycle $C$ of length at least $c+1$ in $G$ has a chord (an edge joining a pair of non consecutive vertices in $C$). An optimization problem $\cal{P}$ is $\Omega(f(n))$-hard if there exists a constant $c > 0$ so that $\cal{P}$ admits no $c.f(n)$-approximation algorithm, unless P=NP (or NP has quasi-polynomial Las-Vegas algorithms). \\ 
{\bf Roadmap:} In Section \ref{complexity}, we analyze the complexity of $(s,t)$-CVS and present various hardness results.  In Section \ref{algorithms}, we present an approximation algorithm and polynomial-time algorithms for $(s,t)$-CVS in special graph classes.
\section{Complexity of $(s,t)$-CVS: Classical and Parameterized Hardness}
\label{complexity}
We first establish a classical hardness of $(s,t)$-CVS by presenting a polynomial-time reduction from the Group Steiner tree to $(s,t)$-CVS.  Moreover, the same reduction establishes an hardness of approximation for $(s,t)$-CVS.  We then analyze $(s,t)$-CVS on graphs with chordality $c$ and show that it is NP-complete on chordality 5 graphs.  We conclude this section by showing that $(s,t)$-CVS parameterizing above the $(s,t)$-vertex connectivity is $W[2]$-hard. 
\subsection{Classical Hardness: A Reduction from Group Steiner tree to $(s,t)$-CVS}
The decision version of $(s,t)$-CVS is given below
\begin{center}
\begin{tabular}{|p{14cm}|}
\hline 
{\bf Instance:} A graph $G$, a non-adjacent pair $(s,t)$, and $q \in \mathbb{Z^+}$\\
{\bf Question:} Is there a $(s,t)$-vertex separator $S \subset V(G)$, $|S| \leq q$ and $G[S]$ is connected? \\
\hline 
\end{tabular}
\end{center} 
The Group Steiner tree problem can be stated as follows: given a connected undirected unweighted graph $G$, an integer $r$, and a collection of sets, which we call groups $g_1,g_2,\ldots,g_l \subseteq V(G)$,  the objective is to find a subtree $T$ of $G$ with at most $r$ edges that contains at least one vertex from each group $g_i$.   We assume that the groups are disjoint.  The Group Steiner tree problem is a generalization of the Steiner tree problem \cite{garey} and therefore, it is NP-complete.   \\
We transform an instance $I=(G, g_1,g_2,\ldots,g_l \subseteq V(G), r)$ of the Group Steiner tree to the corresponding instance $I'=(G',s,t,l+r+1)$ of $(s,t)$-CVS as follows:  $V(G')=V(G) \cup \{s,t\} \cup \{x_i ~|~ 1 \leq i \leq l\}$. $E(G')=E(G) \cup \{\{s,x_i\} ~|~ 1 \leq i \leq l\} \cup \{\{t,x_i\} ~|~ 1 \leq i \leq l\} \cup \{\{x_i,y\} ~|~  y \in g_i \mbox{ and } 1 \leq i \leq l \}$.  An example is illustrated in Figure \ref{groupfig}.    
\begin{figure}
\begin{center}
\includegraphics[scale=0.7]{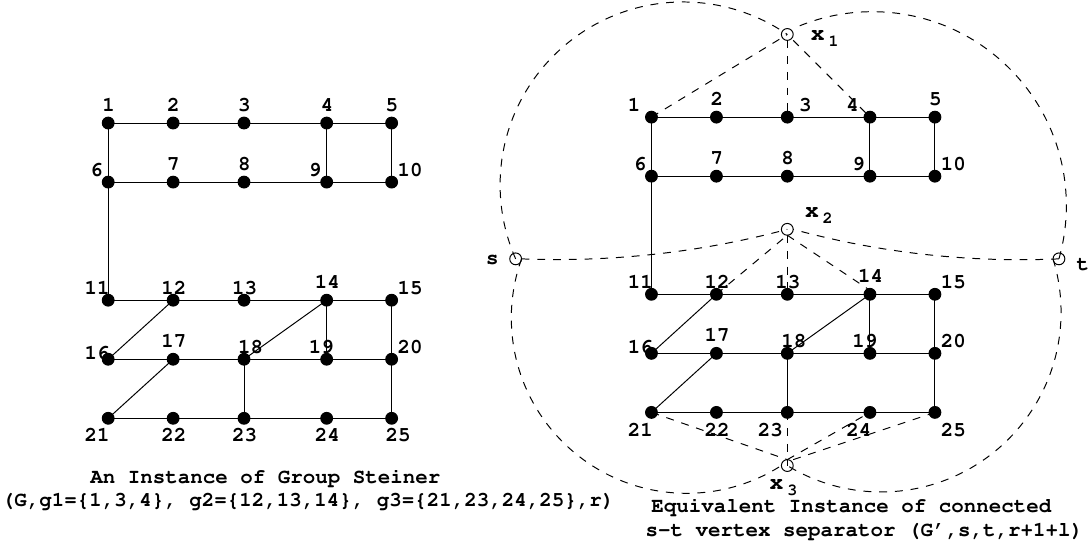}
\caption{An instance of Group Steiner tree reduces to an instance of $(s,t)$-CVS}\label{groupfig}
\end{center}
\end{figure}
\begin{theorem}
\label{approx-redn}
For $I$ and $I'$ as defined above,  $G$ has a Group Steiner tree with at most $r$ edges  if and only if $G'$ has a $(s,t)$-CVS of size at most $r+1+l$.
\end{theorem}
\begin{proof}
We first prove the necessity.  Given that $G$ has a Group Steiner tree $T$ with at most $r$ edges that contains at least one vertex from each group $g_i$.  By the construction of $G'$, it is clear that the $(s,t)$-vertex connectivity is $l$.  Therefore, any $(s,t)$-CVS in $G'$ has at least $l$ vertices.  Clearly, these $l$ new vertices together with at most $r+1$ vertices in $T$ form a $(s,t)$-CVS of size at most $r+1+l$ in $G'$.  Conversely, by the construction of $G'$,  any $(s,t)$-CVS $S$ of size at most $r+1+l$ must contain all $x_i$'s.  i.e. $N_{G'}(s) \subset S$.  This is true because $N_{G'}(s)$ is a $(s,t)$-vertex separator.  Since $S$ is connected and $N_{G'}(s)$ is an independent set, it follows that by the construction $S \setminus N_{G'}(s)$ is connected.   Moreover, $S$ must contain at least one element of $N_{G'}(x_i)$ for each $x_i$.   Since $|S \setminus N_{G'}(s)| \leq r+1$,  any spanning tree on $S \setminus N_{G'}(s)$ is a Group Steiner tree with at most $r$ edges.  Hence, the theorem follows. \qed
\end{proof}
As a consequence of the above theorem, it follows that $(s,t)$-CVS is NP-hard and it is easy to verify that $(s,t)$-CVS is in NP as certificate testing can be done in polynomial time using standard graph traversals \cite{coreman}.   Therefore, $(s,t)$-CVS is NP-complete.  We now show that our reduction establishes a stronger result: $(s,t)$-CVS is $\Omega(log^{2-\epsilon}n)$-hard, for all $\epsilon >0$ unless NP has quasi-polynomial Las-Vegas algorithms. \\
{\bf Hardness of Approximation of $(s,t)$-CVS:}  The Group Steiner tree problem with $l$ groups is at least as hard as the Set Cover problem, thus can not be approximated to a factor $o(\log l)$, unless $P=NP$ \cite{feige}.  On the hardness of approximation due to \cite{group-hard}, the following result is known:  Group Steiner tree problem is $\Omega(\log^{2-\epsilon}n)$-hard, for all $\epsilon >0$ unless NP has quasi-polynomial Las-Vegas algorithms.  i.e. there is no polynomial-time approximation algorithm for Group Steiner tree with approximation factor $\delta.log^{2-\epsilon}n$ for some $\delta >0$ and for any $\epsilon >0$, unless NP has quasi-polynomial Las-Vegas algorithms.   We now show that the above reduction is an approximation-ratio preserving reduction.  Let $OPT_g$ and $OPT_c$ denote the size of any optimum solution of the Group Steiner tree problem and the $(s,t)$-CVS problem, respectively.  Note that $OPT_c=OPT_g+l$ and $OPT_g \geq l$.  Suppose there is an $(1+\alpha)$-approximation algorithm for $(s,t)$-CVS, where $\alpha \leq \delta.log^{2-\epsilon}n$, for some $\delta,\epsilon >0$.  Then the size of the output of the algorithm is $(1+\alpha) OPT_c= (1+\alpha)(OPT_g+l) \leq (1+\alpha)(OPT_g+ OPT_g)  = 2(1+\alpha)OPT_g$. \\  This implies $2(1+\alpha)$-approximation algorithm for the Group Steiner tree problem, which is unlikely, unless NP has quasi-polynomial Las-Vegas algorithms \cite{group-hard}. 
\subsection{Complexity of $(s,t)$-CVS on Chordality $c$ Graphs}
Although $(s,t)$-CVS in general graphs is NP-complete, $(s,t)$-CVS is polynomial-time solvable on chordal graphs.  This is true because in chordal graphs, every minimal vertex separator is a clique \cite{golu}.  Now, this line of thought leaves open the complexity of $(s,t)$-CVS in chordality $c$ graphs.  A graph is said to have chordality $c$, if there is no induced cycle of length at least $c+1$.  Note that chordal graphs have chordality 3.  We now show that $(s,t)$-CVS on chordality 5 graphs is NP-complete.  
\begin{theorem}
\label{ch5-npc}
$(s,t)$-CVS is NP-complete on chordality 5 graphs.
\end{theorem}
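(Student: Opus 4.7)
The plan is to reuse the reduction of Theorem~\ref{approx-redn} but to specialise the Group Steiner Tree source instance to the case where the host graph is a complete graph $K_n$. On $K_n$, every vertex subset induces a connected subgraph and hence admits a spanning tree, so Group Steiner Tree on $K_n$ coincides with Minimum Hitting Set over the family $\{g_1,\dots,g_l\}$, which is classically NP-complete. Membership of $(s,t)$-CVS in NP is immediate since a candidate separator of size at most $q$ can be checked for both separation and connectivity by BFS in polynomial time. Correctness of the specialised reduction is inherited verbatim from Theorem~\ref{approx-redn}; the remaining work is to argue that the graph $G'$ produced by Theorem~\ref{approx-redn} from $G=K_n$ has chordality exactly 5.

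Let $Y=V(K_n)$ and $X=\{x_1,\dots,x_l\}$. In $G'$, $Y$ is a clique, $X$ is an independent set, $s$ and $t$ are non-adjacent and each fully adjacent to $X$, and the neighbours of $x_i$ outside $\{s,t\}$ all lie in $g_i\subseteq Y$. I would establish the chordality bound via three structural observations about any induced cycle $C$ of $G'$: (i) because $Y$ is a clique, $C$ contains at most two vertices of $Y$ and, if exactly two, they are consecutive in $C$; (ii) because $X$ is independent, no two $X$-vertices are consecutive in $C$; (iii) if $s\in V(C)$, its two cycle-neighbours lie in $X$, and any further $x_j\in V(C)$ produces a chord $s-x_j$, and symmetrically for $t$. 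A short case split on whether $C$ contains $s$, $t$, both, or neither then forces $|V(C)|\le 5$, while the witness $s-x_i-y_1-y_2-x_j-s$ with $y_1\in g_i\setminus g_j$ and $y_2\in g_j\setminus g_i$ shows that chordality is exactly 5.

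The main obstacle will be the subcase where $C$ contains both $s$ and $t$: such cycles combinatorially admit up to eight vertices (the two cycle-neighbours of $s$, those of $t$, two $Y$-vertices, plus $s$ and $t$), and one must check that every configuration of length at least six is shortcut by one of the ubiquitous edges $s-x_j$, $t-x_j$, or by a clique edge inside $Y$. Once this case analysis is carried out, NP-completeness of $(s,t)$-CVS on chordality 5 graphs follows immediately by composing the specialised reduction with Theorem~\ref{approx-redn}.
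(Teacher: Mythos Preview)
Your proposal is correct and takes a route that is different from the paper's, though the two converge on structurally almost identical target graphs.

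The paper reduces from Steiner Tree on split graphs (cited from~\cite{chordal-steinernpc}): given a split graph $G$ with clique side $V_1$ and independent side $V_2$ and terminal set $R$, it builds $G'$ by attaching $s$ and $t$ to every vertex of $R$, and then argues by a small case analysis on the neighbours $u_1,u_p$ of $s$ in a putative long induced cycle that chordality is at most~5. You instead specialise the Group Steiner construction of Theorem~\ref{approx-redn} to $G=K_n$, which amounts to a direct reduction from Hitting Set; the resulting $G'$ is precisely the paper's construction for the special case where the split graph has clique $Y=V(K_n)$, independent set $X=\{x_1,\dots,x_l\}$, and $R=X$. What you gain is that the correctness of the reduction is inherited verbatim from Theorem~\ref{approx-redn}, and you avoid the external citation for split-graph Steiner hardness in favour of the elementary Hitting Set. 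Your chordality argument via the three structural observations is also crisper than the paper's case split, in particular because in your instance $N(s)=N(t)=X$ is entirely on the independent side, whereas in the paper's instance $R$ may meet both $V_1$ and $V_2$.

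Two very minor remarks. First, your flagged ``obstacle'' in the case $s,t\in V(C)$ dissolves immediately from observation~(iii): since every $x_j\in V(C)$ is a chord-partner of $s$, the only $X$-vertices on $C$ are the two cycle-neighbours of $s$, and by symmetry these must also be the two cycle-neighbours of $t$; hence $C=s\,x_a\,t\,x_b$ has length exactly~4 and no $Y$-vertices can appear. Second, the witness 5-cycle requires two non-nested groups $g_i,g_j$; this need not hold for every Hitting Set instance, but only chordality \emph{at most}~5 is needed for the hardness claim, so this does not affect the argument.
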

\begin{proof}
{\bf $(s,t)$-CVS is in NP: } Given a certificate on an input instance $(G,s,t,q)$ of $(s,t)$-CVS, the certificate on Yes instances is a set $S \subseteq V(G)$ which is a connected $(s,t)$-vertex separator of cardinality at most $q$.  Clearly, $S$ can be verified in polynomial time by standard reachability algorithms \cite{coreman}. \\ 
{\bf $(s,t)$-CVS is NP-hard: }It is known from \cite{chordal-steinernpc} that Steiner tree problem on split graphs is NP-complete and this can be reduced in polynomial time to $(s,t)$-CVS in chordality 5 graphs using the following construction.   Note that any split graph $G$ can be seen as a graph with $V(G)=V_1 \cup V_2$ such that $G[V_1]$ is a clique and $G[V_2]$ is an independent set.  Also, split graphs are a subclass of chordal graphs and hence have chordality 3.   We map an instance $(G,R,r)$ of Steiner tree problem on split graphs to the corresponding instance $(G',s,t,r+1)$ of $(s,t)$-CVS as follows:  $V(G')=V(G) \cup \{s,t\}$ and $E(G')=E(G) \cup \{\{s,v\} ~|~ v \in R \} \cup \{ \{t,v\} ~|~ v \in R\}$. 
We now show that instances created by this transformation have chordality 5.  i.e., in $G'$, any cycle $C$ of length at least 6 has a chord.
Clearly,  $C$ must contain either $s$ or $t$ but not both.   Let $\{s,u_1,\ldots,u_p\}, p \geq 5$ denote the ordering of vertices in $C$. \\
{\bf Case 1:} $\{u_1,u_p\} \subseteq V_2$.  Since $G$ is a split graph, $\{u_2,u_{p-1}\} \subset V_1$, and therefore, $\{u_2,u_{p-1}\} \in E(G)$ which is a chord in $C$.  \\
{\bf Case 2:} $u_1 \in V_2$ and $u_p \in V_1$.   Clearly, $u_2 \in V_1$ and $\{u_2,u_p\} \in E(G)$, a chord in $C$.  \\
Therefore, we conclude that chordality of $G'$ is 5.   A proof in the similar line of Theorem \ref{approx-redn} argues that $G$ has a Steiner tree with at most $r$ edges if and only if $G'$ has a $(s,t)$-CVS with at most $r+1$ vertices.  As a consequence, it follows that $(s,t)$-CVS in chordality 5 graphs is NP-hard.  Thus, we conclude $(s,t)$-CVS in chordality 5 graphs is NP-complete.    \qed
\end{proof}
\subsection{$(s,t)$-CVS Parameterized above the $(s,t)$-vertex connectivity is $W[2]$-hard}
We consider the following parameterization which is the size of $(s,t)$-CVS minus the $(s,t)$-vertex connectivity.  Since the size of every $(s,t)$-CVS is at least the $(s,t)$-vertex connectivity, it is natural to parameterize above the $(s,t)$-vertex connectivity and its parameterized version is defined below.  
\begin{center}
\begin{tabular}{|p{12cm}|}
\hline
{\bf $(s,t)$-CVS parameterized above the $(s,t)$-vertex connectivity:}\\
{\bf Instance:} A graph $G$, a non-adjacent pair $(s,t)$ with $(s,t)$-vertex connectivity $k$ and $r \in \mathbb{Z^+}$ \\
{\bf Parameter:} $r$ \\
{\bf Question:} Is there a $(s,t)$-vertex separator $S \subset V(G)$, $|S| \leq k+r$ such that $G[S]$ is connected? \\
\hline 
\end{tabular}
\end{center}
We now show that there is no fixed-parameter tractable algorithm for $(s,t)$-CVS parameterized above the $(s,t)$-vertex connectivity.  In order to characterize those problems that do not seem to admit a fixed-parameter tractable algorithms, Downey and Fellows defined a {\em parameterized reduction} and a hierarchy of intractable parameterized problem classes above FPT, the popular classes are  $W[1]$ and $W[2]$.  We refer \cite{rolf} for details about parameterized reductions.   
We now present a parameterized reduction from parameterized Steiner tree problem to $(s,t)$-CVS parameterized above the $(s,t)$-vertex connectivity.  This parameterized version of Steiner tree problem is shown to be $W[2]$-hard in \cite{downey}.
\begin{center}
\begin{tabular}{|p{12cm}|}
\hline
{\bf Parameterized Steiner tree problem:} \\
{\bf Instance:} A graph $G$, a terminal set $R \subseteq V(G)$, and an integer $p$\\
{\bf Parameter:} $p$ \\
{\bf Question:} Is there a set of vertices $T \subseteq V(G) \setminus R$ such that $|T| \leq p$ and $G[R \cup T]$ is connected? $T$ is called Steiner set (Steiner vertices).\\
\hline 
\end{tabular} 
\end{center} 
\begin{theorem}
$(s,t)$-CVS Parameterized above the $(s,t)$-vertex connectivity is $W[2]$-hard.
\end{theorem}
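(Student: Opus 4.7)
The plan is to reduce from the parameterized Steiner tree problem, which is $W[2]$-hard by \cite{downey}, to $(s,t)$-CVS parameterized above $(s,t)$-vertex connectivity. The reduction will be a stripped-down version of the gadget used in Theorem~\ref{approx-redn}: instead of introducing the auxiliary vertices $x_i$ that create a gap of $l$ between the connectivity and any CVS, I want the baseline $(s,t)$-vertex connectivity of the constructed graph to coincide exactly with $|R|$, so that the ``excess'' over the connectivity matches the Steiner-tree parameter $p$.

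Given an instance $(G,R,p)$ of parameterized Steiner tree, I construct $G'$ by setting $V(G')=V(G)\cup\{s,t\}$ and $E(G')=E(G)\cup\{\{s,v\},\{t,v\}\mid v\in R\}$, and set the CVS parameter to $r:=p$. The first step is to show that the $(s,t)$-vertex connectivity of $G'$ is exactly $k=|R|$: for each $v\in R$, the path $s\text{-}v\text{-}t$ is a length-2 $(s,t)$-path, so every $(s,t)$-separator must contain $v$; hence every $(s,t)$-separator contains $R$, and $R$ itself is a separator because $s$ is isolated in $G'\setminus R$. The second step is the correspondence: since every $(s,t)$-separator $S$ of $G'$ contains $R$, I can write $S=R\cup T$ with $T\subseteq V(G)\setminus R$, and because $s,t\notin S$ the induced subgraph $G'[S]$ uses no edges added during the construction, so $G'[S]=G[R\cup T]$. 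Consequently $G'$ admits a connected $(s,t)$-separator of size at most $k+r=|R|+p$ if and only if $G$ admits a Steiner set $T$ of size at most $p$ for the terminal set $R$. The three formal requirements of a parameterized reduction (polynomial time, parameter map $p\mapsto r=p$, yes-iff-yes) are then immediate, and $W[2]$-hardness transfers.

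The main obstacle I anticipate is not in constructing or analyzing the gadget but in making sure the ``above guarantee'' accounting is tight, namely that the $(s,t)$-vertex connectivity of $G'$ is \emph{exactly} $|R|$ rather than merely at least $|R|$ up to lower-order terms. The cone construction above is designed precisely for this: making every vertex of $R$ a common neighbour of $s$ and $t$ forces $R$ into every separator and simultaneously makes $R$ a separator, pinning the connectivity at $|R|$. With that in place, the rest of the argument is a direct translation between Steiner subtrees in $G$ spanning $R$ with at most $p$ internal (Steiner) vertices and connected $(s,t)$-separators in $G'$ of size $|R|+p$, and the theorem follows.
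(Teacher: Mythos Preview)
Your proposal is correct and follows essentially the same route as the paper: the identical cone construction $V(G')=V(G)\cup\{s,t\}$ with $s,t$ both adjacent to every vertex of $R$, the same computation that the $(s,t)$-vertex connectivity equals $|R|$, and the same bijection between Steiner sets $T$ for $R$ in $G$ and connected $(s,t)$-separators $R\cup T$ in $G'$. If anything, your write-up is slightly more careful in pinning the connectivity at exactly $|R|$ and in observing that $G'[S]=G[R\cup T]$ since $s,t\notin S$, but the underlying argument is the same.
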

\begin{proof}
Given an instance $(G,R,r)$ of Steiner tree problem, we construct the corresponding instance $(G',s,t,k,r)$ of $(s,t)$-CVS with the $(s,t)$-vertex connectivity $k=|R|$ as follows:  $V(G')=V(G) \cup \{s,t\}$ and $E(G')=E(G) \cup \{\{s,v\} ~|~ v \in R \} \cup \{ \{t,v\} ~|~ v \in R\}$.
We now show that $(G,R,r)$ has a Steiner tree with at most $r$ Steiner vertices if and only if $(G',(s,t),k,r)$ has a $(s,t)$-CVS of size at most $k+r$.  For {\em only if} claim,  $G$ has a Steiner tree $T$ containing all vertices of $R$ and at most $r$ Steiner vertices.  By our construction of $G'$, to disconnect $s$ and $t$, we must remove the set $N_{G'}(s)$ which is $R$, as there is an edge from each element of $N_{G'}(s)$ to $t$.  Since $G$ has a Steiner tree with at most $r$ Steiner vertices, implies that in $G'$, it guarantees a $(s,t)$-CVS of size at most $k+r$.  For {\em if} claim, $G'$ has a $(s,t)$-CVS $S$ with at most $k+r$ vertices.  Since the $(s,t)$-vertex connectivity is $k$ and $S$ is a $(s,t)$-vertex separator, from our construction of $G'$ it follows that $N_{G'}(s) \subseteq S$ and $k=|N_{G'}(s)|$.  This implies that $G$ has a Steiner tree with $R=N_{G'}(s)$ as the terminal set and $S \setminus N_{G'}(s)$ as the Steiner vertices of size at most $r$.  Hence the claim.  $|V(G')|=|V(G)|+2$ and $|E(G')| \leq |E(G)| + 2 |V(G)|$ and the construction of $G'$ takes $O(|E(G)|)$.  Clearly, the reduction is a parameter preserving parameterized reduction. Therefore, we conclude that deciding whether a graph has a $(s,t)$-CVS is $W[2]$-hard with parameter $r$.  \qed
\end{proof}
{\bf Remark:} The natural parameter for $(s,t)$-CVS is the size of $(s,t)$-CVS.  A recent result due to \cite{stcvs-fpt} shows that $(s,t)$-CVS parameterized by the size is fixed-parameter tractable.
\section{Algorithms for $(s,t)$-CVS on Graphs of Bounded Chordality}
\label{algorithms}
The focus of this section is to present a polynomial-time approximation algorithm with ratio $(\lceil\frac{c}{2}\rceil)$ for $(s,t)$-CVS on graphs with chordality $c$ and a polynomial-time algorithm for $(s,t)$-CVS on chordality 4 graphs.  Also, the existence of a  		polynomial-time algorithm for $(s,t)$-CVS on graphs of bounded treewidth is shown.
\begin{lemma}
\label{chordality-l-lemma}
Let $G$ be a graph of chordality $c$.  For each minimal vertex separator $S$, for each $u,v \in S$ such that $\{u,v\} \notin E(G)$, there exists a path of length at most $\lceil \frac{c}{2} \rceil$ whose internal vertices are in $C_s$ or $C_t$, where $C_s$ and $C_t$ are components in $G \setminus S$ containing $s$ and $t$, respectively. 
\end{lemma}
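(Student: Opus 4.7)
The plan is to build the claimed short path by forming an induced cycle through the two components and using the chordality bound. The key enabling fact is that $S$ is a \emph{minimal} $(s,t)$-vertex separator, so every vertex of $S$ has at least one neighbour in $C_s$ and at least one neighbour in $C_t$; otherwise some proper subset of $S$ would already separate $s$ from $t$. In particular, both $u$ and $v$ have neighbours in $C_s$ and in $C_t$, so (since each of $C_s$ and $C_t$ is connected) there is at least one $u$--$v$ path whose internal vertices all lie in $C_s$ and at least one whose internal vertices all lie in $C_t$.

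Let $P_s$ be a shortest $u$--$v$ path with internal vertices in $C_s$ and let $P_t$ be a shortest $u$--$v$ path with internal vertices in $C_t$. Because they are shortest, each $P_s$ and $P_t$ is an induced path of $G$. Since $\{u,v\} \notin E(G)$, both paths have length at least $2$, so their union is a genuine cycle $C$ of length $|P_s|+|P_t|$ passing through $u$ and $v$.

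The next step, and the main point, is to argue that $C$ is an induced cycle of $G$. Any chord of $C$ would connect two non-consecutive vertices of $C$. A chord cannot lie inside $P_s$ or inside $P_t$ because each of those paths is induced. A chord cannot join an internal vertex of $P_s$ to an internal vertex of $P_t$ either, because the former lies in $C_s$ and the latter in $C_t$, which are distinct components of $G \setminus S$, so there is no edge between them. A chord cannot be incident to $u$ or $v$ since the only neighbours of $u$ on $C$ other than $u$'s two cycle-neighbours would have to lie in $C_s \cup C_t$ and be non-consecutive on $C$, which the same component-separation argument rules out. Hence $C$ is induced.

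Since $G$ has chordality $c$, we conclude $|P_s| + |P_t| = |C| \le c$, so $\min(|P_s|, |P_t|) \le \lfloor c/2 \rfloor \le \lceil c/2 \rceil$, and the shorter of $P_s, P_t$ is the desired path. I expect no serious obstacle beyond carefully justifying that $C$ has no chord; everything else is either a direct consequence of minimality of $S$ or of shortest paths being induced.
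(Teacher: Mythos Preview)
Your approach is exactly the paper's: take shortest $u$--$v$ paths through $C_s$ and through $C_t$, observe that their union is an induced cycle, and invoke the chordality bound (the paper phrases it as a contradiction, you phrase it directly, but the content is identical). One small slip: in your third chord case, the ``same component-separation argument'' does not apply to edges out of $u$ or $v$, since $u,v\in S$ may well have neighbours in both $C_s$ and $C_t$; however, that case is already fully handled by your first case (a chord from $u$ to a non-consecutive vertex of $P_s$ or $P_t$ would be a chord of that shortest path) together with the hypothesis $\{u,v\}\notin E(G)$, so the error is harmless.
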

\begin{proof}
Suppose for some non-adjacent pair $\{u,v\} \subseteq S$, both $P^1_{uv}$ and $P^2_{uv}$ are of length more than $\lceil \frac{l}{2} \rceil$, where $P^1_{uv}$ and $P^2_{uv}$ are shortest paths from $u$ to $v$ whose internal vertices are in $C_s$ and $C_t$, respectively. 
Now, there is an induced cycle $C$ containing $u$ and $v$ such that $|C| > \lceil \frac{l}{2} \rceil$ + $\lceil \frac{l}{2} \rceil=l$.  However, this contradicts the fact that $G$ is of chordality $l$.   \qed
\end{proof}
\subsection{$(\lceil\frac{c}{2}\rceil)$-Approximation for $(s,t)$-CVS on Graphs with Chordality $c$}            
Let $OPT$ denote the size of any minimum $(s,t)$-CVS on chordality $c$ graphs.  Clearly, $OPT \geq k$, where $k$ is the $(s,t)$-vertex connectivity.  The description of approximation algorithm $ALG$ is as follows:
\begin{itemize}
\item[1.] Compute a minimum $(s,t)$-vertex separator $S$ in $G$. $S=\{v_1,\ldots,v_k\}$ be an arbitrary ordering of vertices in $S$.\\
\item[2.] For each non-adjacent pair $\{v_i,v_{i+1}\} \subseteq S, 1\leq i \leq k-1$, find a path $P_{v_iv_{i+1}}$ of length at most $\lceil \frac{c}{2} \rceil$ whose internal vertices are in $C_s$ or $C_t$.  Such a path exists as per Lemma \ref{chordality-l-lemma}. 
$S'={\displaystyle \bigcup_{1 \leq i \leq k-1} V(P_{v_iv_{i+1}})} \cup S$. 
\end{itemize}
Observe that $S'$ is a $(s,t)$-CVS in $G$.  The upper bound on the size of $S'$ output by $ALG$ is: $|S'| \leq k+(k-1)(\lceil \frac{c}{2} \rceil-1)$.  Therefore, approximation ratio $\beta$ is \\ \\
$\beta \leq \frac{k+(k-1)(\lceil \frac{c}{2} \rceil-1)}{k} = 1+(1-\frac{1}{k})(\lceil \frac{c}{2} \rceil-1) < 1+(\lceil \frac{c}{2} \rceil-1)=\lceil \frac{c}{2} \rceil$
\subsection{$(s,t)$-CVS in Chordality 4 Graphs is Polynomial time}
We have already mentioned that $(s,t)$-CVS in chordality 3 graphs is polynomial time as every minimal vertex separator is a clique and due to Theorem \ref{ch5-npc}, it is NP-complete on chordality 5 graphs.  This observation leaves open the question of $(s,t)$-CVS in chordality 4 graphs.  We now present a structural result about minimal vertex separators in chordality 4 graphs, using which we show that $(s,t)$-CVS in chordality 4 graphs is polynomial-time solvable.  This is the dichotomy we mentioned in the abstract.  
\begin{theorem}
\label{ch4-char-thm}
Every minimal $(s,t)$-vertex separator $S$ in a chordality 4 graph $G$ satisfies one of the following properties:
\begin{itemize}
\item[(1)] $G[S]$ is connected. 
\item[(2)] Let $\{X_1,\ldots,X_r\}, r \geq 2$ denote the set of connected components in $G[S]$ and $V(X_i)$ denotes the vertex set of the component $X_i$.  In $G \setminus S$, there exists $u$ in $C_s$ and there exists $v$ in $C_t$ such that for all $1 \leq i \leq r, N_G(u) \cap V(X_i) \neq \emptyset$ and $N_G(v) \cap V(X_i) \neq \emptyset$, where $C_s$ and $C_t$ denote the connected components in $G \setminus S$ containing $s$ and $t$, respectively.
\end{itemize}
\end{theorem}
\begin{proof}
Our proof is by induction on $n=|V(G)|$.  It is easy to verify that the statement of the lemma is true when $|V(G)| = 1$, $|V(G)|=2$, and $|V(G)|=3$.  Let us now assume all chordality 4 graphs on $n-1$ vertices satisfy our claim.  Consider a chordality 4 graph $G$ on $n$ vertices.  Consider a minimal $(s,t)$-vertex separator $S$ with $|S|=1$.  Then $S$ contains a cut vertex and our claim is true.  If $|S|=2$, then either $G[S]$ is connected or we can find $u$ in $C_s$ such that $S \subset N_G(u)$ and $v$ in $C_t$ such that $S \subset N_G(v)$.  Consider the case when $|S| \geq 3$.  Let $C_s$ and $C_t$ denote components in $G \setminus S$ containing $s$ and $t$, respectively.  Without loss of generality, we assume that both $C_s$ and $C_t$ contain at least two elements.  Otherwise, it must be the case that $S=N_G(s)$ or $S=N_G(t)$.
\\ {\bf Case 1:} $G[S]$ is not an independent set.  Let $e=\{x,y\}$ be an edge contained in a connected component $X$ of $G[S]$.  Consider the graph $G \cdot e$ obtained from $G$ by contracting $e$.  Clearly, $|V(G \cdot e)|=n-1$. Let $S'=S \setminus \{x,y\} \cup \{z_{xy}\}$.  Edges incident on $x$ or $y$ are now incident on $z_{xy}$.  Observe that $S'$ is a minimal $(s,t)$-vertex separator in $G \cdot e$.  If $G[S']$ is connected in $G \cdot e$ then it implies that $G[S]$ is connected in $G$ as well.  Otherwise, by the induction hypothesis, in $G \cdot e$, there exists $u$ and $v$ with the desired property.  In particular, $V(X') \cap N_{G \cdot e}(u)$ and $V(X') \cap N_{G \cdot e}(v)$ are non empty where $X'=X \setminus \{x,y\} \cup \{z_{xy}\}$ and $X$ is the connected component in $S$ containing $x$ and $y$.  Thus, both $u$ and $v$ have the desired property in $G$ too.  \\
{\bf Case 2:} $G[S]$ is an independent set.  Now consider $x,y \in S$.  Consider the graph $G \cdot xy$ obtained by contracting the non-adjacent pair $\{x,y\}$.  Let $S'=S \setminus \{x,y\} \cup \{z_{xy}\}$ and edges incident on $x$ or $y$ are now incident on $z_{xy}$.  Observe that $S'$ is a minimal $(s,t)$-vertex separator in $G \cdot xy$.  Clearly, $|V(G \cdot xy)|=|V(G)|-1$ and hence, by the induction hypothesis, in $G \cdot xy$, there exists $u$ in $C'_s$ and $v$ in $C'_t$ satisfying our claim where $C'_s$ and $C'_t$ are connected components in $(G \cdot xy) \setminus S'$ containing $s$ and $t$, respectively.  Let $S=\{x,y,u_1,\ldots,u_p\},p \geq 1$.  We now prove in $G$ the existence of vertex $u$ in $C_s$ satisfying our claim.  If $\{u,x\},\{u,y\} \in E(G)$, then clearly $u \in C_s$ is the desired vertex in $G$.  Otherwise, without loss of generality assume that $x \notin N_G(u)$. Thus, $S \setminus \{x\} \subset N_G(u)$.  Let $P^s_{xu}$ denote a shortest path between $x$ and $u$ such that the internal vertices are in $C_s$.  Consider the vertex $w$ in $P^s_{xu}$ such that $\{x,w\} \in E(G)$.  Such a $w$ exists as $S$ is a minimal $(s,t)$-vertex separator in $G$.  If for all $z \in S$, $\{w,z\} \in E(G)$, then $w$ is a desired vertex in $C_s$.  Otherwise, there exists $z \in S$ such that $\{w,z\} \notin E(G)$.   Let $P^s_{wu}$ denote the subpath of $P^s_{xu}$ on the vertex set $\{w=w_1,\ldots,w_q=u\},q \geq 2$.  If for each $2 \leq i \leq q-1$, $\{z,w_i\} \notin E(G)$, then $P^s_{xu}\{u,z\}P^t_{xz}$ form an induced cycle of length at least 5 in $G$ where $P^t_{xz}$ denote a shortest path between $x$ and $z$ such that the internal vertices are in $C_t$.  Otherwise, for some $2 \leq i \leq q-1$, $\{z,w_i\} \in E(G)$.  Let $i$ be the smallest integer such that $\{z,w_i\} \in E(G)$.  In this case, $P^s_{xw_i}\{w_i,z\}P^t_{xz}$ form an induced cycle of length at least 5 in $G$ where $P^s_{xw_i}$ denote the subpath of $P^s_{xu}$ on the vertex set $\{x,w=w_1,\ldots,w_i\}, 2 \leq i \leq q-1$.  However, this contradicts the fact that $G$ is a graph of chordality 4.  Therefore, there exists a vertex $\hat{u}\in\{u,w\}$ in $C_s$ such that $S \subseteq N_G(\hat{u})$.  The proof for the existence of vertex $v$ in $C_t$ such that $S \subseteq N_G(v)$ is symmetric. 
\qed
\end{proof}
\begin{lemma}
\label{stcvs-ch4-lemma}
Let $G$ be a chordality 4 graph with the $(s,t)$-vertex connectivity $k$.  The size of any minimum $(s,t)$-CVS in $G$ is either $k$ or $k+1$.  
\end{lemma}
\begin{proof}
Note that any minimum $(s,t)$-CVS is of size at least $k$ as the $(s,t)$-vertex connectivity is $k$.  If a minimum $(s,t)$-vertex separator itself is connected then we get a minimum $(s,t)$-CVS of size $k$.  Otherwise, there exists a minimum $(s,t)$-vertex separator $S$ such that $G[S]$ is a collection of connected components.  In this case, we know from Theorem \ref{ch4-char-thm}, there exists a vertex $v$ in one of the components of $G \setminus S$ such that $S \subseteq N_G(v)$.   Therefore, $S \cup \{v\}$ is a minimum $(s,t)$-CVS of size $k+1$.  Hence the claim.  \qed
\end{proof}
Using the above two claims, and with the help of the following two key combinatorial observations on the structure of minimal vertex separators in chordality 4 graphs we show that $(s,t)$-CVS in chordality 4 graphs is polynomial-time solvable.  We make use of the notion of {\em contractible edges}.  Given a connected graph $G$ with $(s,t)$-vertex connectivity $k$, an edge $e \in E(G)$ is said to be {\em contractible} if $(s,t)$-vertex connectivity in $G \cdot e$ is at least $k$.  Otherwise $e$ is called {\em non-contractible}.  
\begin{lemma}
\label{obse}
Let $G$ be a connected graph and $S$ be a minimum $(s,t)$-vertex separator with $|S| \geq 2$. Let $F =\{\{u,v\} ~|~ u,v \in S$ and $\{u,v\} \in E(G)\}$.  $G[S]$ is connected if and only if the $(s,t)$-vertex connectivity in $G \cdot F$ (the graph obtained by contracting $F$ in $G$) is one.   Further, $G \cdot F$ contains a cut-vertex.
\end{lemma}
\begin{proof}
Clearly each edge contained in $S$ is non-contractible and by contracting all non-contractible edges, $S$ becomes a cut-vertex.  Moreover, any edge contraction does not disconnect a graph which is already connected.  Therefore, $\kappa(G \cdot F)=1$.  We prove the converse by contradiction.  Suppose, for all minimum $(s,t)$-vertex separator $S$, $G[S]$ contains at least two components.  This implies that the graph resulting from contracting any sequence of non-contractible edges has $(s,t)$-vertex connectivity at least two.  A contradiction to the fact $G \cdot F$ contains a cut-vertex.  Hence the claim. \qed
\end{proof}
Using the above lemma, we can decide in polynomial time whether a chordality 4 graph with the $(s,t)$-vertex connectivity $k$ contains a $(s,t)$-CVS of size $k$ or $k+1$.  The approach is to contract all non-contractible edges and check whether the resulting graph contains a cut-vertex or not.  If so, then the given chordality 4 graph contains a $(s,t)$-CVS of size $k$.  Otherwise, any minimum $(s,t)$-vertex separator in $G$ together with the vertex $v$ in one of the components in $G \setminus S$ (due to Theorem \ref{ch4-char-thm}) yields a $(s,t)$-CVS of size $k+1$ in $G$.   Our next combinatorial observation help us in finding a minimum $(s,t)$-CVS in polynomial time.  
\begin{lemma}
Let $G$ be a chordality 4 graph with the $(s,t)$-vertex connectivity $k$ and $e=\{u,v\}$ be a non-contractible edge contained in minimum vertex separators $S$ and $S'$ such that $G[S]$ is connected and $G[S']$ is not connected.  Then, for each $x \in S' \setminus S$, there exists $y \in S$, such that $\{x,y\} \in E(G)$.   
\end{lemma}
\begin{proof}
Clearly $k \geq 3$.  Let $S=\{x_1=u,x_2=v,\ldots,x_k\}$, $S'=\{y_1=u,y_2=v,\ldots,y_k\}$, and $S'\setminus S=\{y_i,\ldots,y_j\}, i \geq 3,j \leq k$. We prove by induction on $\kappa(G)$.  For the base case, $\kappa(G)=3$. i.e., $|S' \setminus S|=1$.   Suppose $y_k \in S'\setminus S$ is such that for any $x_i \in S$, $\{x_i,y_k\} \notin E(G)$.  In particular, $\{x_3,y_3\} \notin E(G)$.   This implies that any shortest path $P_{x_3y_3}$ between $x_3$ and $y_3$ is of length at least 2 in $G$.  Since $S$ is a minimal vertex separator, there exists $w$ in $C_1$ such that $\{x_2,w\} \in E(G)$, where $C_1$ is a connected component in $G \setminus S$.  Note that, the paths $P_{x_3y_3}, P_{wy_3}$, and $P'_{x_2y_3}$ induces a cycle of length at least 5 in $G$, where $P'_{x_2y_3}$ is a shortest path whose internal vertices are in $S$ or in $C_2$, where $C_2 (\neq C_1)$ is a connected component in $G \setminus S$.  Observe that there can not be chords from $w$ to any internal vertex in $P_{x_3y_3}$ as $S'$ is a vertex separator in $G$.  Hence, our assumption that $\{x_3y_3\} \notin E(G)$ is wrong.  Therefore, the claim is true for the base case.  If suppose there exists a chord between $x_2$ and some internal vertex, say $z$ in $P_{x_3y_3}$, then $\{x_1,x_2,z\}$ is a connected minmum vertex separator and we run the above argument by considering $S=\{x_1,x_2,z\}$.   We assume that our claim is true for all connected graphs with $\kappa(G) < k, k\geq 3$ and satisfying the premise of the lemma.  Let $G$ be a chordality 4 graph with $\kappa(G)=k, k \geq 4$.  Consider $S$ and $S'$ as defined in the premise of the lemma.   Consider the graph $G'$ obtained from $G$ by contracting the pairs $\{x_k,x_{k-1}\}$ and $\{y_j,y_{j-1}\}$ and $p$ and $q$ are the newly created vertices due to contraction, respectively.  Clearly, $\kappa(G')=k-1$ as $S$ and $S'$ are minimum vertex separators of size $k-1$.  By our induction hypothesis in $G'$, for each $x \in S' \setminus S$, there exists $y \in S$, such that $\{x,y\} \in E(G')$.  Since $q$ has a neighbour in $S$ in $G'$, at least one of $y_j$ or $y_{j-1}$ must have a neighbour in $S$ in $G$.  If suppose in $G$, $y_{j-1}$ has a neighbour in $S$ and $y_j$ does not have a neighbour in $S$.  An argument similar to the base case produces an induced cycle of length at least 5 containing $y_j$.  This completes the induction and therefore, the lemma follows. \qed
\end{proof}
\begin{algorithm}
\begin{enumerate}
\item[]{\bf A polynomial-time algorithm to find $(s,t)$-CVS in chordality 4 graphs} \\ 
\item[]{\bf Input: Chordality 4 graph with the $(s,t)$-vertex connectivity $k$}
\item[1.] Find the set $E_c$ of all contractible edges in $G$.
\item[2.] Contract $E_c$ and check the $(s,t)$-vertex connectivity in the resulting graph $G_c$.
\item[3.] If $\kappa(G_c)=1$, then the size of minimum $(s,t)$-CVS is $k$.  Otherwise it is $k+1$.
\item[4.] If $\kappa(G_c) \geq 2$, then any minimum vertex separator $S$ in $G$ augmented with the vertex $v$ in one of the components in $G \setminus S$ such that $S \subseteq N_G(v)$ is a minimum $(s,t)$-CVS of size $k+1$.
\item[5.] If $\kappa(G_c)=1$, then there exists a connected minimum vertex separator.  To obtain one such separator, perform the following;
\item[6] For each non-contractible edge $e$ in $G$, contract $e$ and find a minimum vertex separator $S'$ in $G \cdot e$.
\begin{itemize}
\item[6a.] If $G[S']$ is connected, then output the minimum vertex separator $S$ in $G$ corresponding to $S'$.
\item[6b.] If $G[S']$ is not connected, then check whether there exists a connected minimum vertex separator $S''$ in the neighbourhood of $S'$.
\end{itemize}
\end{enumerate}
\end{algorithm}
\subsection{$(s,t)$-CVS Parameterized by treewidth is Fixed-parameter Tractable}
\label{mso}
We transform the instance of $(s,t)$-CVS problem to the satisfiability of a formula in {\em monadic second order logic} (MSOL).  Using Courcelle's Theorem \cite{courcelle} that a problem over bounded treewidth graphs expressible in MSOL can be solved in linear time.   A recent paper by Marx \cite{marx} uses the same approach to prove the fixed-parameter tractability of other constrained separator problems.
We now present the description of monadic second order logic for $(s,t)$-CVS.  The atomic predicates used are as follows:  For a set $S \subseteq V(G)$, $S(v)$ denotes that $v$ is an element of $S$,  and the predicate $E(u,v)$ denotes the adjacency between $u$ and $v$ in $G$. $T=\{s,t\}$ and $k$ is the upper bound on the size of the desired $(s,t)$-CVS.  We construct the formula $\phi$ in MSOL as
\begin{center}
$\phi=\exists S(AtMost_k(S) \wedge Separates(S) \wedge ConnectedSubgraph(S))$
\end{center}
Here the predicate $AtMost_k(S)$ is true if and only if $|S| \leq k$, $Separates(S)$ is true if and only if $S$ separates the vertices of $T$  in $G$, and $ConnectedSubgraph(S)$ is true if and only if $S$ induces a connected subgraph in $G$. 
We refer \cite{marx} for formulae $AtMost_k(S)$  and $Separates(S)$.  The formula $Connects(Z,s,t)$ is due to \cite{gottlob}, where
$Connects(Z,s,t)$ is true if and only if in $G$, there is a path from $s$ and $t$ to all vertices of which belong to $Z$.  $ConnectedSubgraph(S)$ is true if and only if for every subset $S'$ of $S$ there is an edge between $S'$ and $S \setminus S'$.
\begin{itemize}
\item $AtMost_k(S): \forall c_1,\ldots,\forall c_{k+1} {\displaystyle \bigvee_{1 \leq i,j \leq k+1}}(c_i=c_j)$ 
\item $Separates(S): \forall s \forall t \forall Z (T(s) \wedge T(t) \wedge \neg(s=t) \wedge \neg S(s) \wedge \neg S(t) \wedge Connects(Z,s,t)) \rightarrow (\exists v(S(v) \wedge Z(v))))$,  $Connects(Z,s,t): Z(s)\wedge Z(t) \wedge \forall P((P(s) \wedge \neg P(t)) \rightarrow (\exists v \exists w (Z(v) \wedge Z(w) \wedge P(v) \wedge \neg P(w) \wedge E(v,w))))$ 
\item $ConnectedSubgraph(S):  \forall S' \subseteq S((S \not=S') \wedge (\exists u(S(u) \wedge S'(u))) \rightarrow (\exists u \exists w(S(u) \wedge S(w) \wedge (u,w) \in E(G) \wedge S'(v) \wedge \neg S'(w))) $ 
\end{itemize}
This complete the observation that in bounded treewidth graphs, a minimum $(s,t)$-CVS can be found in linear time. \\
{\bf Concluding Remarks and Further Research:}\\
In this paper, we have investigated the complexity of connected $(s,t)$-vertex separator ($(s,t)$-CVS) and shown that for every $\epsilon >0$, $(s,t)$-CVS is $\Omega(log^{2-\epsilon}n)$-hard, unless NP has quasi-polynomial Las-Vegas algorithms.  Also shown that $(s,t)$-CVS is NP-complete on graphs with chordality at least 5 and presented a polynomial-time algorithm for $(s,t)$-CVS on chordality 4 graphs.  Moreover, parameterizing above $(s,t)$-vertex connectivity is $W[2]$-hard.  An interesting problem for further research is to parameterize  $(s,t)$-CVS by the $(s,t)$-vertex connectivity.
\bibliographystyle{splncs}
\bibliography{stacs-connected-separator}
\end{document}